\title{\LARGE \bf Target Controllability and Target Observability \\ of Structured Network Systems}
\author{Arthur N. Montanari, Chao Duan, \textit{Member, IEEE}, and Adilson E. Motter, \textit{Senior Member, IEEE}
\thanks{The authors acknowledge support from US Army Research Office Grant W911NF-19-1-0383 (A.N.M. and A.E.M.) and the National Natural Science Foundation of China Grant GQQNKP001 (C.D.).}
\thanks{A. N. Montanari is with the Department of Physics and Astronomy, Northwestern University, Evanston, IL 60208, USA (e-mail: arthur.monta nari@northwestern.edu).
C.~Duan is with the School of Electrical Engineering, Xi'an Jiaotong University, Xi'an 710049, China. 
A. E. Motter is with the Department of Physics and Astronomy, the Department of Engineering Sciences and Applied Mathematics, and the Northwestern Institute on Complex Systems, Northwestern University, Evanston, IL 60208, USA.} 
}
\newcommand{\R}{\mathbb{R}}		
\newcommand{\transp}{\mathsf{T}}					
\newtheorem{defin}{Definition}
\newtheorem{thm}{Theorem}
\newtheorem{cor}{Corollary}
\theoremstyle{definition}
\newtheorem{assump}{Assumption}
\newtheorem{rem}{Remark}
\newtheorem{example}{Example}
\newcommand*{\QEDwhite}{\hfill\ensuremath{\square}}
\begin{document}

\maketitle
\thispagestyle{empty}
\pagestyle{empty}

\begin{abstract}      
The duality between controllability and observability enables methods developed for full-state control to be applied to full-state estimation, and vice versa. In applications in which control or estimation of all state variables is unfeasible, the generalized notions of output controllability and functional observability establish the minimal conditions for the control and estimation of a target subset of state variables, respectively. Given the seemly unrelated nature of these properties, thus far methods for target control and target estimation have been developed independently in the literature. Here, we characterize the graph-theoretic conditions for target controllability and target observability (which are, respectively, special cases of output controllability and functional observability for structured systems). This allow us to rigorously establish a weak and strong duality between these generalized properties. When both properties are equivalent (strongly dual), we show that efficient algorithms developed for target controllability can be used for target observability, and vice versa, for the optimal placement of sensors and drivers. These results are applicable to large-scale networks, in which control and monitoring are often sought for small subsets of nodes.

\smallskip
DOI: {\normalfont \href{https://doi.org/10.1109/LCSYS.2023.3289827}{10.1109/LCSYS.2023.3289827}}
\end{abstract}

\section{Introduction}

Controllability and observability are  properties that respectively enable full-state control and full-state estimation of a dynamical system. The duality between these properties allow methods developed for feedback controller design to be used for observer design, and vice versa. Beyond classical techniques for pole placement in feedback systems, this duality also finds important applications in optimal control theory \cite{Todorov2008} 
and decentralized control of networked systems \cite{Krtolica1980}. In the context of complex networks, the pressing problem of optimally placing actuators and sensors to ascertain full-state control and monitoring can be solved by a single efficient algorithm \cite{Liu2011} due to the duality between the graph-theoretic notions of \textit{structural} controllability and \textit{structural} observability \cite{Lin1974a}.

Full-state control and estimation are, however, often unfeasible or unneeded in high-dimensional applications such as large-scale networks \cite{Motter2015,Montanari2020}. 
Physical, cost, and energy constraints in the placement and operation of actuators and sensors often limit our ability to fully control or observe a network \cite{Pasqualetti2013,Summers2016,Duan2019}. To circumvent these limitations, the generalized notions of output controllability \cite{Lazar2020} and functional observability \cite{Fernando2010} establish the minimal conditions under which part of the state vector (e.g., a \textit{target} subset of state variables) can be controlled and estimated, rather than the full-state vector. These properties enable the control and estimation of target nodes in networks while requiring substantially less resources \cite{Gao2014,Montanari2022}.

The output controllability of a system does not imply in general the functional observability of the dual (transposed) system, which is in contrast with the classical duality between controllability and observability. Consequently, these generalized properties have been studied separately up until now, leading to the independent development of methods for target/output control \cite{Morse1971,Gao2021,Duan2022}, 
functional observer design \cite{Fernando2010,Rotella2016b}, and actuator/sensor placement \cite{Gao2014,Czeizler2018,Li2020,Li2021,Montanari2022,Li2023}. Yet, a rigorous relation has been recently established between these properties, as characterized by the principles of weak and strong duality \cite{Montanari2023}. In particular, the weak duality establishes that the functional observability of a system implies the output controllability of the dual system, whereas the strong duality establishes that under a particular condition the converse also holds and both properties become equivalent. 
This opens an opportunity for methods developed for output controllability problems to be mapped to functional observability problems, and vice versa. 

In this letter, we establish a graph-theoretic characterization of the weak and strong duality principles between target controllability and target observability, which are special notions of output controllability and functional observability for structured systems (Section~\ref{sec.duality}). To this end, we also derive the graph-theoretic conditions for target controllability (Theorem~\ref{thm.targetctrb}), which have been so far restricted to special classes of systems in the literature (Remark~\ref{rem.targetpapers}). As an application of our results, we show that, when strong duality holds, the proposed graph-theoretic characterization enables the use of scalable algorithms to solve both optimal driver and optimal sensor placement in large-scale networks (Section \ref{sec.optimalplac}). The efficacy of our methods in large networks is numerically demonstrated using the \textit{C. elegans} neural network.

\section{Preliminaries}
\label{sec.background}

Consider the linear time-invariant dynamical system
\begin{align}
    \dot{\bm x} &= A\bm x + B\bm u, 
    \label{eq.dynsys}
    \\
    \bm y &= C\bm x,
    \label{eq.output}
\end{align}
where $\bm x\in\R^n$ is the state vector, $\bm u\in\R^p$ is the input vector, $\bm y\in\R^q$ is the output vector, $A\in\R^{n\times n}$ is the system matrix, $B\in\R^{n\times p}$ is the input matrix, and $C\in\R^{q\times n}$ is the output matrix. The linear function of the state variables
\begin{equation}
    \bm z = F\bm x
    \label{eq.target}
\end{equation}
defines the \textit{target vector} $\bm z \in \R^r$ sought to be controlled or estimated ($r\leq n$), where $F\in\R^{r\times n}$ is the functional matrix.

The system \eqref{eq.dynsys}--\eqref{eq.target} or, equivalently, the triple $(A,B;F)$ is \textit{output controllable} if, for any initial state $\bm x(0)$ and target state $\bm z^*\in\R^r$, there exists an input $\bm u(t)$ that steers $\bm x(0)$ to some final state $\bm x(t_1)$ satisfying $\bm z(t_1) = F\bm x(t_1) = \bm z^*$ in finite time $t\in [0,t_1]$ \cite{Lazar2020}. A sufficient and necessary condition for this property is given by \cite{Lazar2020}
\begin{equation}
    \rank(F\mathcal C) = \rank(F),
    \label{eq.outputctrb}
\end{equation}
where $\mathcal C = [B \,\, AB \,\, \ldots \,\, A^{n-1} B]$ is the controllability matrix. Despite the terminology ``output'' controllability, note that condition \eqref{eq.outputctrb} is defined for any functional $F$, which is not necessarily related to the output matrix $C$; whether the target variables $\bm z_i(t)$ sought to be controlled are monitored (e.g., measured or estimated) or not depends on the feedback/feedforward control application under consideration.

Moreover, the system \eqref{eq.dynsys}--\eqref{eq.target} or the triple $(C,A;F)$ is \textit{functionally observable} if, for any unknown initial state $\bm x(0)$, there exists a finite time $t_1>0$ such that knowledge of the output $\bm y(t)$ and input $\bm u(t)$ over $t\in [0,t_1]$ suffices to uniquely determine the target state $\bm z(0) = F\bm x(0)$. A sufficient and necessary condition is given by \cite{Jennings2011}
\begin{equation}
    \rank\left(\begin{bmatrix} \mathcal O \\ F \end{bmatrix}\right) = \rank(\mathcal O),
    \label{eq.functobsv}
\end{equation}
where $\mathcal O = [C^\transp \,\, (CA)^\transp \,\, \ldots \,\, (CA^{n-1})^\transp]^\transp$ is the observability matrix. Here, assume that $\rank[C^\transp \,\, F^\transp]^\transp = \rank(C) + \rank(F)$; otherwise, $\bm z_i = \bm\alpha^\transp\bm y$, for some $i$ and $\bm\alpha\in\R^r$, allowing $\bm z_i$ to be trivially estimated without an observer.

In spite of the duality between the (full-state) observability of a system $(C,A)$ and the controllability of the dual system $(C^\transp, A^\transp)$, functional observability and output controllability are not dual properties in general when $\rank(F)< n$~\cite{Montanari2023}. To see this, consider a pair of dynamical systems $(C,A;F)$ and $(A^\transp,C^\transp;F)$, where $\mathcal O$ is the observability matrix of the former system and $\mathcal C = \mathcal O^\transp$ is the controllability matrix of the latter.
Note that 
condition \eqref{eq.outputctrb} is equivalent to $\rank(F\mathcal O^\transp) = \rank (F)$ for a triple $(A^\transp,C^\transp;F)$. Thus, it follows that any system $(C,A;F)$ that satisfies condition \eqref{eq.functobsv} also satisfies condition \eqref{eq.outputctrb} for the dual $(A^\transp,C^\transp;F)$. The converse, however, is not always true.
As a consequence, $(A^\transp,C^\transp;F)$ may be output controllable without necessarily implying that $(C,A;F)$ is functionally observable (see Example~\ref{examp.graph} below). 
\section{Target Controllability and Observability}
\label{sec.duality}

We show that the relation and equivalence between output controllability and functional observability are characterized by the notions of weak and strong duality. 
This duality follows directly from an intuitive graph-theoretic representation of output controllability and functional observability, which allows us to explicitly leverage the structure of the system matrix $A$ and its inputs, outputs, and target variables (defined by matrices $B$, $C$, and $F$, respectively). Before stating our results, we first define graph concepts for structured systems.


\subsection{Structured systems and graph theory}

    A matrix $M\in\{0,\star\}^{m\times n}$ is a \textit{structured matrix} if $M_{ij}$ is either a fixed zero entry or an independent nonzero entry, denoted by a $\star$. A matrix $\tilde M$ is a numerical realization of $M$ if real numbers are assigned to all nonzero entries of $M$.

The \textit{inference graph} of a system \eqref{eq.dynsys}--\eqref{eq.target} is denoted by $\mathcal G(A,B,C;F)=\{\mathcal V,\mathcal E\}$, where $\mathcal V = \mathcal X\cup\mathcal U\cup \mathcal Y$ is the set of nodes, $\mathcal E = \mathcal E_{\mathcal X}\cup \mathcal E_{\mathcal U}\cup \mathcal E_{\mathcal Y}$ is the set of edges, and $(A,B,C,F)$ are structured matrices. 
Nodes represent state variables $\mathcal X = \{\bm x_1,\ldots,\bm x_n\}$, inputs $\mathcal U = \{\bm u_1,\ldots,\bm u_p\}$ (driver nodes), and outputs $\mathcal Y = \{\bm y_1,\ldots,\bm y_q\}$ (sensor nodes). Let $(\bm x_i,\bm x_j)\in\mathcal E_{\mathcal X}$ (directed edge from $\bm x_j$ to $\bm x_i$) if $A_{ij}\neq 0$, $(\bm x_i,\bm u_j)\in\mathcal E_{\mathcal U}$ if $B_{ij}\neq 0$, and $(\bm y_i,\bm x_j)\in\mathcal E_{\mathcal Y}$ if $C_{ij}\neq 0$. The set of \textit{target nodes} $\mathcal T\subseteq\mathcal X$ defines a set of state variables sought to be controlled or estimated, where $\bm x_j\in\mathcal T$ if $F_{ij}\neq 0$ for some $i$. The inference graph is denoted simply by $\mathcal G(A,B;F)$ and $\mathcal G(C,A;F)$ when considering the output controllability and functional observability of a triple, respectively.

A subset of nodes $\mathcal V'\subseteq\mathcal X$ has a \textit{dilation} in a graph $\mathcal G$ if $|P(\mathcal V')|<|\mathcal V'|$, where $|\cdot|$ denotes the set cardinality and $P(\mathcal V')$ is the set of all nodes $v_i\in\mathcal X\cup\mathcal U$ that have a direct link to $\mathcal V'$ (i.e., the set of predecessors of $\mathcal V'$).
Similarly, $\mathcal V'\subseteq\mathcal X$ has a \textit{contraction} in $\mathcal G$ if $|S(\mathcal V')|<|\mathcal V'|$, where $S(\mathcal V')$ is the set of all nodes $v_i\in\mathcal X\cup\mathcal Y$ that have a direct link from $\mathcal V'$ to $v_i$ (i.e., the set of successors of $\mathcal V'$). 
Let $\mathcal D_k$ ($\mathcal K_k$) be a \textit{minimal dilation (contraction) set} of $\mathcal G$ if $\mathcal D_k$ ($\mathcal K_k$) has a dilation (contraction) and no subset $\mathcal D'_k\subset \mathcal D_k$ ($\mathcal K_k'\subset \mathcal K_k$) has a dilation (contraction).

\begin{rem}
    $\mathcal G(A,B)$ has a dilation if there is a set of $k$ rows of $[A\,\, B]$ that contains nonzero entries in less than $k$ columns of the submatrix formed by these $k$ rows. In fact, $\mathcal G(A,B)$ has a dilation if and only if $\rank[A \,\, B] < n$ \cite{Brisk1984}.
\label{rem.dilation}
\end{rem}

We now revisit a fundamental result on the controllability \cite{Lin1974a} and, by duality, observability of structured systems.

\begin{defin} \label{def.structuralctrb}
    The structured system $(A,B)$ [$(C,A)$] is~structurally controllable [observable] if there exists a numerical realization $(\tilde A,\tilde B)$ [$(\tilde C,\tilde A)$] that is controllable [observable].
\end{defin}

\begin{thm} {\normalfont \cite{Lin1974a}} \label{thm.structctrb}
    The system $(A,B)$ [or $(C,A)$] is structurally controllable [or observable] if and only if $\mathcal G(A,B)$ [or $\mathcal G(C,A)$] satisfies the following conditions:
    \begin{enumerate}
        \item for each state variable $\bm x_i\in\mathcal X$, there exists a path from some driver node $\bm u_i\in\mathcal U$ to $\bm x_i$ [or every $\bm x_i\in\mathcal X$ has a path to some sensor node $\bm y_i\in\mathcal Y$];
        \item $\mathcal G$ has no dilations [or contractions].
    \end{enumerate}
\end{thm}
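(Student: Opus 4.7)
The plan is to prove the controllability statement; the observability version then follows from the standard duality $\rank(\mathcal O) = \rank(\mathcal C^\transp)$ applied to the pair $(A^\transp, C^\transp)$, since all graph conditions transform consistently under arrow reversal (dilations become contractions, driver-reachability becomes sensor-reachability). I would then split the controllability proof into necessity and sufficiency.

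For necessity, assume some realization $(\tilde A,\tilde B)$ is controllable. If condition 1 fails, there is a state $\bm x_i$ with no path from any driver, so in every realization the reachable subspace lies in a proper coordinate subspace omitting $\bm x_i$, contradicting controllability. If condition 2 fails, Remark~\ref{rem.dilation} gives $\rank[A\,\,B]<n$ structurally, hence $\rank[\tilde A\,\,\tilde B]<n$ for every realization. The Popov--Belevitch--Hautus test at $\lambda=0$ then yields $\rank[\tilde A - 0\cdot I\,\,\tilde B]<n$, contradicting controllability.

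For sufficiency, I would use a genericity argument. Any $n\times n$ minor of the controllability matrix $\mathcal C$ is a polynomial in the free nonzero entries of $(A,B)$, so the set of controllable realizations is either empty or Zariski-open and dense. It therefore suffices to exhibit one controllable realization. The standard route is the \emph{cactus decomposition}: conditions 1 and 2 together imply that $\mathcal G(A,B)$ contains a spanning subgraph formed by vertex-disjoint ``stems'' (simple paths originating at driver nodes) and ``buds'' (cycles attached to stems). One can then assign values to the nonzero entries along each stem and bud so that each stem--bud component becomes controllable in an explicit canonical form, with bud eigenvalues chosen distinct across components to preclude mode cancellations, producing an overall controllable $(\tilde A,\tilde B)$.

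The main obstacle is the combinatorial step: showing that conditions 1 and 2 actually guarantee a spanning cactus subgraph. I expect the hard part to be a K\"onig--Hall-type matching argument on the bipartite graph induced by $[A\,\,B]$, where ``no dilations'' provides the deficiency-free hypothesis needed to extract a perfect matching from state nodes to their predecessors, and ``input-reachability'' ensures that each matched chain can be rooted at a driver. Once this matching is in hand, the assembly into stems and buds, and the subsequent numerical realization, is routine.
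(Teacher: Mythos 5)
This statement is Lin's classical structural controllability theorem; the paper does not prove it but imports it verbatim from \cite{Lin1974a} (the Appendix heading ``Proof of Theorem~\ref{thm.structctrb}'' is a typo---the body proves Theorem~\ref{thm.targetctrb}). So there is no in-paper argument to compare against; your sketch should be judged against the classical proof, and it follows essentially that route. Your necessity argument is correct: unreachable states confine the reachable subspace to a coordinate subspace, and a dilation forces $\rank[\tilde A\ \tilde B]<n$ in \emph{every} realization (since $k$ rows supported on fewer than $k$ columns are dependent for any numerical values), which kills the PBH test at $\lambda=0$. The genericity reduction for sufficiency is also sound, including the implicit point that a controllable realization supported only on a spanning subgraph (other $\star$ entries set to zero) still certifies that the relevant minor is not identically zero. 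The one step you correctly flag as the crux---that conditions 1 and 2 yield a spanning cactus---is where your sketch is thinnest: the Hall/K\"onig argument on the bipartite graph of $[A\ B]$ gives a matching saturating all state nodes, whose matched edges decompose $\mathcal X$ into input-rooted paths (stems) and vertex-disjoint cycles; but you still need condition 1 to supply, for each such cycle, a \emph{distinguished edge} from an already-attached stem or bud into the cycle so it can be adjoined as a bud, and one must check the attachment order can always be arranged (this is the content of Lin's ``cactus covering'' lemma). That detail is standard but not quite ``routine,'' and spelling it out is what separates a sketch from a proof; otherwise your outline is the accepted argument.
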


\subsection{Target controllability and target observability}

We now establish the graph-theoretic conditions for output controllability and functional observability. These conditions are presented for systems in which nodes are independently driven, measured, and targeted, as formalized below.

\begin{assump} \label{assump.independent}%
    We assume that each column of $B$ and each row of $C$ and $F$ have a single nonzero entry. We also assume that one of the following graph-theoretical conditions on the structured matrix $A$ is satisfied: (i) there exists some numerical realization $\tilde A$ that is diagonalizable; (ii) $A_{ii} \neq 0$ for every target node $\bm x_i\in\mathcal T$. We note that the assumption on $A$ can be relaxed to a weaker algebraic condition based on the Jordan form, which will be presented in future work.
\end{assump}

\begin{defin} \label{def.targetctrb}
    The structured system $(A,B;F)$ is target controllable if there exists some numerical realization $(\tilde A,\tilde B;\tilde F)$ that is output controllable. Likewise, the structured system $(C,A;F)$ is target observable if there exists some numerical realization $(\tilde C,\tilde A;\tilde F)$ that is functionally observable.
\end{defin}

Given the large adoption of the term ``target controllability'' by the community \cite{Gao2014,Waarde2017,Czeizler2018,Li2020,Moothedath2019,Li2021,Li2023} and the duality between output controllability and functional observability \cite{Montanari2023}, it seems appropriate to unify these two structural properties under a common nomenclature\textemdash \textit{target controllability} and \textit{target observability}\textemdash as in Definition~\ref{def.targetctrb}.

We present the following theorem on target controllability, which establishes graph-theoretic conditions equivalent to condition \eqref{eq.outputctrb} for a structured system $(A,B;F)$.

\begin{thm} 
\label{thm.targetctrb}
    The system $(A,B;F)$ is target controllable if and only if $\mathcal G(A,B;F)$ satisfies the following conditions:
    \begin{enumerate}
        \item for each target node $\bm x_i\in\mathcal T$, there exists a path from some driver node $\bm u_i\in\mathcal U$ to $\bm x_i$;
        \item no subset $\mathcal T_\ell\subseteq\mathcal T$ in $\mathcal G'(A,B;F)$ has a dilation, where $\mathcal G'$ is a subgraph of $\mathcal G$ containing all possible paths from every $\bm u_i\in\mathcal U$ to any $\bm x_i\in\mathcal T$.
    \end{enumerate}
\end{thm}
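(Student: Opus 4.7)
The plan is to translate the algebraic criterion \eqref{eq.outputctrb} into a condition on a specific submatrix of the controllability matrix and then match it to the two graph-theoretic conditions. Under Assumption~\ref{assump.independent}, each row of $F$ is an indicator of a distinct target node, so $\rank(F)=|\mathcal T|$ and $F\mathcal C$ has the same row rank as the submatrix $\mathcal C_{\mathcal T}$ of $\mathcal C$ formed by the rows indexed by $\mathcal T$. Target controllability of $(A,B;F)$ is therefore equivalent to the existence of a numerical realization in which $\mathcal C_{\mathcal T}$ has full row rank $|\mathcal T|$.

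For the necessity direction, I would establish both conditions by contraposition. If some $\bm x_i\in\mathcal T$ violates~(1), each entry of the $i$-th row of $A^kB$ is a weighted sum over directed walks of length $k+1$ from drivers to $\bm x_i$; with no such walks, this row is identically zero in every realization, forcing $\mathcal C_{\mathcal T}$ to be rank-deficient. For~(2), I would note that entries of $\mathcal C_{\mathcal T}$ depend only on edges lying in $\mathcal G'$, since any walk from a driver that reaches a target is contained in $\mathcal G'$ by construction. A dilation of some $\mathcal T_\ell\subseteq\mathcal T$ in $\mathcal G'$ means the restricted predecessor set satisfies $|P(\mathcal T_\ell)|<|\mathcal T_\ell|$; by Remark~\ref{rem.dilation} applied to the restricted matrix $[A\,\,B]$, the corresponding $|\mathcal T_\ell|$ rows are linearly dependent in every realization. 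Iterating through powers of $A$ then propagates this dependence to each block $A^kB$ of $\mathcal C_{\mathcal T}$.

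For sufficiency, assuming~(1) and~(2), the goal is to show that the generic (maximum over realizations) row rank of $\mathcal C_{\mathcal T}$ equals $|\mathcal T|$. I would restrict attention to $\mathcal G'$ and use Assumption~\ref{assump.independent} to control the Jordan structure of $\tilde A$: in the diagonalizable case, a PBH-type expansion writes each target row of $\mathcal C_{\mathcal T}$ as a combination of left eigenvectors of $\tilde A$ weighted by $\tilde B$ and Vandermonde factors in the eigenvalues; the absence of any target-subset dilation in $\mathcal G'$ then yields, via a Hall's-theorem style matching between target nodes and their predecessors in $\mathcal G'$, a nonvanishing $|\mathcal T|\times|\mathcal T|$ minor for almost every parameter choice. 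The self-loop alternative in Assumption~\ref{assump.independent} admits an analogous argument, since each target contributes its own diagonal eigenvalue slot. A standard Zariski-density argument then guarantees the existence of a realization attaining the generic rank $|\mathcal T|$.

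The hardest step will be the sufficiency direction: justifying that driver-reachability together with the \emph{subset-wise} no-dilation condition on $\mathcal T$, restricted to $\mathcal G'$, suffice for generic full row rank of $\mathcal C_{\mathcal T}$. Classical structural controllability requires the dilation condition over the entire state space, whereas here only $|\mathcal T|$ independent walks from drivers to targets must be extractable from the partial structure visible in $\mathcal G'$. Assumption~\ref{assump.independent} plays an essential technical role by ruling out cancellations due to repeated eigenvalues or nontrivial Jordan chains, which could otherwise produce a non-structural rank loss even when the graphical conditions hold.
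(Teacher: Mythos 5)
Your proposal follows essentially the same route as the paper's proof: both reduce $\rank(F\mathcal C)$ to the rank of the target-indexed rows of the controllability matrix, observe that these rows depend only on the subgraph $\mathcal G'$ (the paper does this via a block permutation showing $F\mathcal C = F_1\mathcal C_1$), and obtain necessity by exhibiting identically zero rows (condition 1) or structural left null vectors arising from a dilation (condition 2). The only divergence is in how generic full row rank is certified for sufficiency\textemdash you invoke a PBH/Vandermonde expansion with a Hall-type matching and a Zariski-density argument, whereas the paper counts nonzero column supports of the target rows of $\mathcal C_1$ under Assumption~\ref{assump.independent}\textemdash but both hinge on the same structural facts, so this is a difference of finishing technique rather than of approach.
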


\begin{proof} 
See Appendix.
\end{proof}

\begin{rem}     \label{rem.targetpapers}
    Theorem~\ref{thm.targetctrb} generalizes previous results on target controllability \cite{Blackhall2010,Gao2014,Moothedath2019}, as shown next. Assume $\bm u_1$ has a path to all $\bm x_i\in\mathcal T$. For directed tree graphs $\mathcal G(A)$, a system is target controllable if and only if the path length from a driver node to each target node is unique \cite[Th.~2]{Gao2014}, which is equivalent to $\mathcal T_\ell\subseteq\mathcal T$ having no dilations in $\mathcal G'$ since $\mathcal G'$ is also a directed tree and thus has no cycles. For systems with single-input matrix $B$, target controllability holds if $\mathcal G'$ has a perfect matching \cite[Th.~2]{Moothedath2019}, which is sufficient for $\mathcal G'$ to have no dilations, satisfying condition 2 of Theorem~\ref{thm.targetctrb}. Likewise, a multiple-input system is target controllable if $\mathcal G$ can be covered by a union of cacti structures \cite[Th. 17]{Blackhall2010}, which is sufficient for $\mathcal G'$ to have no dilations. Note that Refs. \cite{Blackhall2010,Moothedath2019} established only sufficient conditions.

    
    The related studies~\cite{Waarde2017,Li2021} on target controllability are complementary to our results, providing conditions for less generic types of structured systems (e.g., symmetric matrices \cite{Li2021}) or for a stronger notion of target controllability in~which \textit{all} (rather than \textit{some}, as in Definition~\ref{def.targetctrb}) numerical realizations $(\tilde A,\tilde B; \tilde F)$ are output controllable \cite{Waarde2017}.
\end{rem}

The graph-theoretic conditions for target observability have already been established in Ref.~\cite{Montanari2022} (under the nomenclature of ``structural functional observability''), being equivalent to condition \eqref{eq.functobsv} for a structured system $(C,A;F)$.

\begin{thm} {\normalfont \cite{Montanari2022}} \label{thm.targetobsv}
    The system $(C,A;F)$ is target observable if and only if $\mathcal G(C,A;F)$ satisfies the following conditions:
    \begin{enumerate}
        \item every target node $\bm x_i\in\mathcal T$ has a path to some sensor node $\bm y_i\in\mathcal Y$;
        \item $\mathcal T\cap\mathcal K = \emptyset$, where  $\mathcal K = \bigcup_k \mathcal K_k$ is the union of all minimal contraction sets in $\mathcal G(C,A;F)$.
    \end{enumerate}
\end{thm}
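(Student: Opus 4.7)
The plan is to derive Theorem~\ref{thm.targetobsv} from Theorem~\ref{thm.targetctrb} via a structured duality argument, exploiting the fact that the controllability matrix of the transposed triple $(A^\transp,C^\transp;F)$ equals $\mathcal O^\transp$, where $\mathcal O$ is the observability matrix of $(C,A;F)$. Concretely, I would first show that under Assumption~\ref{assump.independent} the weak-duality implication between functional observability and output controllability of the dual is in fact an equivalence, and then translate the graph-theoretic conditions of Theorem~\ref{thm.targetctrb} applied to the dual triple back to conditions on $\mathcal G(C,A;F)$.

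The first step is to upgrade weak duality to strong duality on this class of structured systems. Weak duality (stated in Section~\ref{sec.background}) already gives that $\rank[\mathcal O^\transp\;F^\transp]^\transp=\rank(\mathcal O)$ implies $\rank(F\mathcal O^\transp)=\rank(F)$. For the converse, the hypothesis that each row of $C$ and $F$ has a single nonzero entry (together with $\rank[C^\transp\;F^\transp]^\transp=\rank(C)+\rank(F)$) means that $C$ and $F$ are, up to nonzero scalings, selectors of disjoint index sets. The assumption on $A$ ensures that generically we can choose a numerical realization in which $A$ is either diagonalizable or has the required self-loops, so that the columns of $\mathcal O^\transp$ indexed by the target set span a subspace whose dimension is exactly the number of target rows of $F$. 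A generic-rank / Schwartz--Zippel style argument then forces the two rank conditions to agree for some realization, giving structural equivalence of functional observability of $(C,A;F)$ and output controllability of the dual triple $(A^\transp,C^\transp;F)$.

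The second step is to translate the conclusion of Theorem~\ref{thm.targetctrb}, applied to $(A^\transp,C^\transp;F)$, into the language of $\mathcal G(C,A;F)$. The inference graph $\mathcal G(A^\transp,C^\transp;F)$ is obtained from $\mathcal G(C,A;F)$ by reversing every edge in $\mathcal E_{\mathcal X}$ and by reinterpreting each sensor node $\bm y_i\in\mathcal Y$ as a driver node. Under this relabeling, a path from some ``driver'' $\bm y_i$ to a target $\bm x_i\in\mathcal T$ in the dual graph is exactly a reversed path from $\bm x_i$ to $\bm y_i$ in $\mathcal G(C,A;F)$, so condition~1 of Theorem~\ref{thm.targetctrb} yields condition~1 of Theorem~\ref{thm.targetobsv}. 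For condition~2, note that the set of predecessors $P(\mathcal T_\ell)$ computed in the dual graph equals the set of successors $S(\mathcal T_\ell)$ computed in $\mathcal G(C,A;F)$ (edges were reversed, and sensors now play the role of drivers). Hence a minimal dilation set $\mathcal T_\ell\subseteq\mathcal T$ in the restricted dual subgraph $\mathcal G'(A^\transp,C^\transp;F)$ is exactly a minimal contraction set of $\mathcal G(C,A;F)$ that intersects $\mathcal T$. Requiring the dual graph to be dilation-free on all target subsets therefore becomes $\mathcal T\cap\mathcal K=\emptyset$.

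The main obstacle is the first step: establishing that the weak duality between functional observability and output controllability of the transposed system becomes an equivalence for the class defined by Assumption~\ref{assump.independent}. The remaining graph translation is essentially bookkeeping once the correspondences ``edge reversal $=$ transpose'' and ``predecessor in dual $=$ successor in original'' are set up. By contrast, showing that no generic numerical realization satisfies one rank condition without the other requires carefully exploiting the single-nonzero-per-row structure of $C$ and $F$, the disjointness of their supports, and the Jordan-structure implications of the condition on $A$—which is precisely the role of Assumption~\ref{assump.independent} and the algebraic refinement the authors defer to future work.
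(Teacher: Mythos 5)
The paper does not actually prove Theorem~\ref{thm.targetobsv}; it is imported from Ref.~\cite{Montanari2022}, where it is established by a direct argument on the rank condition \eqref{eq.functobsv}. Your plan of deriving it from Theorem~\ref{thm.targetctrb} by duality cannot work, because its first step asserts something this paper is at pains to refute: that under Assumption~\ref{assump.independent} functional observability of $(C,A;F)$ becomes \emph{equivalent} to output controllability of the dual $(A^\transp,C^\transp;F)$. It does not. The condition $\rank(F\mathcal O^\transp)=\rank(F)$ says only that no nonzero combination of the rows of $F$ is orthogonal to the row space of $\mathcal O$, whereas \eqref{eq.functobsv} requires the rows of $F$ to lie \emph{inside} that row space; the gap between these two statements survives genericity and is exactly why the paper distinguishes weak from strong duality. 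Example~\ref{examp.graph} is a concrete counterexample satisfying Assumption~\ref{assump.independent}: the system $(A,B;F)$ there is target controllable, yet its dual $(B^\transp,A^\transp;F)$ is not target observable. No Schwartz--Zippel or generic-rank argument can close this gap, because the failure is structural rather than a measure-zero accident.

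The same error resurfaces in your graph translation. Condition~2 of Theorem~\ref{thm.targetctrb}, applied to the dual graph and pulled back through edge reversal, forbids contractions only for subsets $\mathcal T_\ell\subseteq\mathcal T$ consisting entirely of target nodes; it does not yield $\mathcal T\cap\mathcal K=\emptyset$. A minimal contraction set $\mathcal K_k$ may contain non-target nodes, in which case it can meet $\mathcal T$ even though no subset of $\mathcal T$ alone has a contraction (in Example~\ref{examp.graph}, the minimal set is $\{\bm x_2,\bm x_3\}$ with only $\bm x_2\in\mathcal T$). The extra hypothesis $\mathcal T\cap\mathcal D=\emptyset$ in Theorem~\ref{thm.strongduality} is precisely the bridge between the two conditions, and it cannot be extracted from target controllability of the dual alone. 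A correct self-contained proof would have to work directly with the rank condition \eqref{eq.functobsv}, mirroring the Appendix proof of Theorem~\ref{thm.targetctrb} but for the observability matrix and the stronger row-space inclusion it demands.
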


\subsection{Duality principle}

We now establish the weak and strong duality principles for target controllability and target observability.
To this end, consider a pair of structured systems $(C,A;F)$ and $(A^\transp,C^\transp;F)$ and their inference graphs $\mathcal G(C,A;F) = \{\mathcal X\cup\mathcal Y,\mathcal E_{\mathcal X}\cup\mathcal E_{\mathcal Y}\}$ and $\mathcal G(A^\transp,C^\transp;F) =  \{\mathcal X\cup\mathcal U,\mathcal E_{\mathcal X}\cup\mathcal E_{\mathcal U}\}$.

\begin{rem}
    $\mathcal G(A^\transp,C^\transp;F)$ is equivalent to graph $\mathcal G(C,A;F)$ with reversed edges and $\mathcal U = \mathcal Y$. Moreover, a set $\mathcal V'\subseteq \mathcal X$ has a dilation in $\mathcal G(A^\transp,C^\transp;F)$ if and only if $\mathcal V'$ has a contraction in $\mathcal G(C,A;F)$. This is later illustrated in Fig.~\ref{fig.graph}.
\label{rem.dualgraph}
\end{rem}

\begin{thm} {\normalfont \textbf{(Weak duality)}} \label{thm.weakduality}
    If $(C,A;F)$ is target observable, then $(A^\transp, C^\transp; F)$ is target controllable.
\end{thm}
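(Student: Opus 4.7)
My plan is to leverage Definition~\ref{def.targetctrb} together with the algebraic duality between conditions \eqref{eq.functobsv} and \eqref{eq.outputctrb} that was foreshadowed in Section~\ref{sec.background}. The key observation is that target controllability and target observability are defined by the existence of a numerical realization with the corresponding algebraic property, so I can reduce the structured claim to a single numerical-level identity rather than reasoning directly about inference graphs.

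First, assuming $(C,A;F)$ is target observable, Definition~\ref{def.targetctrb} gives a numerical realization $(\tilde C, \tilde A; \tilde F)$ satisfying condition \eqref{eq.functobsv}. I would then verify the equivalence
\[
\rank\begin{bmatrix} \mathcal O \\ \tilde F \end{bmatrix} = \rank(\mathcal O)
\quad\Longleftrightarrow\quad
\rank(\tilde F \mathcal O^\transp) = \rank(\tilde F),
\]
both of which amount to the row space of $\tilde F$ being contained in the row space of $\mathcal O$ (one direction by row augmentation, the other by orthogonal-complement duality). Since $\mathcal O^\transp = [\tilde C^\transp \;\; \tilde A^\transp \tilde C^\transp \;\; \cdots \;\; (\tilde A^\transp)^{n-1}\tilde C^\transp]$ is precisely the controllability matrix of the dual pair $(\tilde A^\transp, \tilde C^\transp)$, the right-hand condition is exactly \eqref{eq.outputctrb} for the triple $(\tilde A^\transp, \tilde C^\transp; \tilde F)$. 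Hence that dual realization is output controllable, and a second application of Definition~\ref{def.targetctrb} yields target controllability of the structured system $(A^\transp, C^\transp; F)$.

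The hardest step, as I see it, is the rank equivalence above — not because it is deep, but because it must be stated cleanly enough to apply to an arbitrary numerical realization without invoking extraneous assumptions. A purely graph-theoretic proof via Theorems~\ref{thm.targetctrb} and~\ref{thm.targetobsv} together with Remark~\ref{rem.dualgraph} is also conceivable: condition~1 of Theorem~\ref{thm.targetobsv} maps to condition~1 of Theorem~\ref{thm.targetctrb} immediately by edge reversal. The subtle point would be condition~2, because the dilation condition in Theorem~\ref{thm.targetctrb} is phrased on the reachability subgraph $\mathcal G'$ whereas the contraction condition in Theorem~\ref{thm.targetobsv} lives on the full graph $\mathcal G$; showing that a dilation of $\mathcal T_\ell \subseteq \mathcal T$ in the dual $\mathcal G'$ forces a minimal contraction set of $\mathcal G(C,A;F)$ to meet $\mathcal T$ requires bookkeeping about successors of target nodes that have no path to any sensor, which is precisely where Assumption~\ref{assump.independent} would need to be invoked. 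For that reason I would favor the algebraic route as the cleanest path to the statement.
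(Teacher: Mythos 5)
Your route is genuinely different from the paper's. The paper proves weak duality purely graph-theoretically, by mapping the two conditions of Theorem~\ref{thm.targetobsv} for $\mathcal G(C,A;F)$ onto the two conditions of Theorem~\ref{thm.targetctrb} for the reversed graph $\mathcal G(A^\transp,C^\transp;F)$ via Remark~\ref{rem.dualgraph} (contractions become dilations, so $\mathcal T\cap\mathcal K=\emptyset$ becomes $\mathcal T\cap\mathcal D=\emptyset$, which rules out dilations of any $\mathcal T_\ell\subseteq\mathcal T$). Your argument instead works entirely at the level of numerical realizations: target observability gives one realization $(\tilde C,\tilde A;\tilde F)$ satisfying \eqref{eq.functobsv}, the algebraic implication \eqref{eq.functobsv}~$\Rightarrow$~\eqref{eq.outputctrb} for the transposed triple gives one output-controllable realization of $(A^\transp,C^\transp;F)$, and Definition~\ref{def.targetctrb} (an existential definition) closes the argument. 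This is valid, shorter, and notably does not require Assumption~\ref{assump.independent} or Theorems~\ref{thm.targetctrb}--\ref{thm.targetobsv} at all; what the paper's graph-theoretic proof buys instead is the explicit condition-by-condition correspondence that is then reused for strong duality.

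There is, however, one claim in your write-up that is false and that you should strike: the asserted \emph{equivalence} between $\rank[\mathcal O^\transp\;\tilde F^\transp]^\transp=\rank(\mathcal O)$ and $\rank(\tilde F\mathcal O^\transp)=\rank(\tilde F)$, with both "amounting to" row-space containment. The first condition is indeed equivalent to $\mathrm{rowspace}(\tilde F)\subseteq\mathrm{rowspace}(\mathcal O)$, but the second is equivalent only to $\mathrm{rowspace}(\tilde F)\cap\ker(\mathcal O)=\{0\}$, which is strictly weaker when $\rank(F)<n$ (e.g., $\mathcal O=[1\;\,0]$, $\tilde F=[1\;\,1]$). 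If your equivalence were true, weak and strong duality would coincide unconditionally, contradicting Example~\ref{examp.graph} and the paper's explicit statement in Section~\ref{sec.background} that the converse implication fails. Fortunately your proof only uses the true direction (row-space containment implies $\rank(\tilde F\mathcal O^\transp)=\rank(\tilde F)$), so the theorem still follows; just replace the "iff" by the one-sided implication. Your cautionary remarks about the graph-theoretic alternative (the mismatch between $\mathcal G'$ and $\mathcal G$ in condition~2) are well taken and identify exactly the step the paper has to argue.
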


\begin{proof}
    Since $(C,A;F)$ is target observable, the conditions of Theorem~\ref{thm.targetobsv} are satisfied. 
    First, given Remark~\ref{rem.dualgraph}, if condition 1 of Theorem~\ref{thm.targetobsv} holds, then for each $\bm x_i\in\mathcal T$ in the reversed graph $\mathcal G(A^\transp,C^\transp;F)$ there exists a path from 
    some $\bm u_i\in\mathcal U$ to $\bm x_i$, satisfying condition 1 of Theorem~\ref{thm.targetctrb}. 
    Second, it follows from Remark~\ref{rem.dualgraph} that $\mathcal K_k = \mathcal D_k$, $\forall k$, where $\mathcal K_k$ and $\mathcal D_k$ are minimal contraction and dilation sets in $\mathcal G(C,A;F)$ and $\mathcal G(A^\transp,C^\transp;F)$, respectively. By induction, $\mathcal K = \mathcal D = \bigcup_k \mathcal D_k$.
    Since $\mathcal T\cap\mathcal K =\emptyset$ holds in $\mathcal G(C,A;F)$, it follows that $\mathcal T\cap \mathcal D = \emptyset$ also holds in $\mathcal G(A^\transp,C^\transp;F)$. 
    If $\mathcal T \cap \mathcal D = \emptyset$, then $\mathcal D_k\not\subseteq\mathcal T_\ell$ for any subset $\mathcal T_\ell\subseteq\mathcal T$. Thus, no subset $\mathcal T_\ell\subseteq\mathcal T$ has a dilation in $\mathcal G$ and hence in $\mathcal G'(A^\transp,C^\transp;F)$. This satisfies condition 2 of Theorem~\ref{thm.targetctrb} and so $(A^\transp, B^\transp;F)$ is target controllable. 
\end{proof}

\begin{thm} {\normalfont \textbf{(Strong duality)}} \label{thm.strongduality}
    The system $(C,A;F)$ is target observable if and only if $(A^\transp,C^\transp;F)$ is target controllable and $\mathcal T\cap\mathcal D = \emptyset$, where $\mathcal D = \bigcup_k \mathcal D_k$ is the union of all minimal dilation sets $\mathcal D_k$ in $\mathcal G(A^\transp,C^\transp;F)$.
\end{thm}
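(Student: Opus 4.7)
The plan is to verify both implications of the biconditional by combining Theorem~\ref{thm.targetctrb}, Theorem~\ref{thm.targetobsv}, and Remark~\ref{rem.dualgraph}, which identifies contraction sets in $\mathcal G(C,A;F)$ with dilation sets in $\mathcal G(A^\transp,C^\transp;F)$.

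For the forward direction, I would simply invoke Theorem~\ref{thm.weakduality}: if $(C,A;F)$ is target observable, then $(A^\transp,C^\transp;F)$ is target controllable. To obtain $\mathcal T\cap\mathcal D=\emptyset$ for free, I would observe that the proof of Theorem~\ref{thm.weakduality} already shows $\mathcal K_k=\mathcal D_k$ for every $k$ (via Remark~\ref{rem.dualgraph}), hence $\mathcal K=\mathcal D$, and since condition~2 of Theorem~\ref{thm.targetobsv} gives $\mathcal T\cap\mathcal K=\emptyset$, the equality transports this to $\mathcal T\cap\mathcal D=\emptyset$ in $\mathcal G(A^\transp,C^\transp;F)$.

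For the converse, I would assume $(A^\transp,C^\transp;F)$ is target controllable together with $\mathcal T\cap\mathcal D=\emptyset$, and verify the two graph-theoretic conditions of Theorem~\ref{thm.targetobsv} in $\mathcal G(C,A;F)$. Condition~1 follows from condition~1 of Theorem~\ref{thm.targetctrb} applied to $\mathcal G(A^\transp,C^\transp;F)$: for each $\bm x_i\in\mathcal T$ there is a path from some $\bm u_i\in\mathcal U$ to $\bm x_i$ in the reversed graph, which, since $\mathcal U=\mathcal Y$ and edge directions are reversed (Remark~\ref{rem.dualgraph}), corresponds to a path from $\bm x_i$ to some $\bm y_i\in\mathcal Y$ in $\mathcal G(C,A;F)$. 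Condition~2 follows immediately from Remark~\ref{rem.dualgraph}: $\mathcal K=\mathcal D$, so $\mathcal T\cap\mathcal D=\emptyset$ gives $\mathcal T\cap\mathcal K=\emptyset$.

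The subtle point that I expect to be the main obstacle to explain clearly (rather than to prove) is why the extra assumption $\mathcal T\cap\mathcal D=\emptyset$ is genuinely needed in the converse: target controllability by itself only rules out dilations of subsets $\mathcal T_\ell\subseteq\mathcal T$ inside the driver-to-target subgraph $\mathcal G'$, whereas target observability requires that no target node participate in \emph{any} minimal contraction set of the full graph $\mathcal G$. This asymmetry between conditions~2 of Theorems~\ref{thm.targetctrb} and \ref{thm.targetobsv} is precisely what the strong-duality hypothesis bridges, and I would highlight this after presenting the two directions, referencing Example~\ref{examp.graph} (to be used later) as the prototypical witness that weak duality cannot be strengthened without it.
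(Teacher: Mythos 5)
Your proposal is correct and follows essentially the same route as the paper: both rely on Remark~\ref{rem.dualgraph} to identify minimal contraction sets of $\mathcal G(C,A;F)$ with minimal dilation sets of $\mathcal G(A^\transp,C^\transp;F)$, match condition~1 of Theorem~\ref{thm.targetctrb} with condition~1 of Theorem~\ref{thm.targetobsv} via edge reversal, and observe that condition~2 of Theorem~\ref{thm.targetobsv} is literally $\mathcal T\cap\mathcal D=\emptyset$ under that identification. Your added remark on why the hypothesis $\mathcal T\cap\mathcal D=\emptyset$ cannot be dropped (condition~2 of Theorem~\ref{thm.targetctrb} only constrains dilations of target subsets within $\mathcal G'$) is a correct and useful elaboration of what the paper leaves implicit.
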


\begin{proof}
    We show that Theorems~\ref{thm.targetctrb} and \ref{thm.targetobsv} are equivalent for graphs $\mathcal G(C,A;F)$ and $\mathcal G(A^\transp,C^\transp;F)$ under the stated conditions. The equivalence between conditions 1 of Theorems~\ref{thm.targetctrb} and \ref{thm.targetobsv} follows directly from Remark~\ref{rem.dualgraph}. 
    It also follows from Remark~\ref{rem.dualgraph} that $\mathcal D_k = \mathcal K_k$, $\forall k$, and $\mathcal D = \mathcal K$. Thus, condition 2 of Theorem~\ref{thm.targetobsv} is equivalent to $\mathcal T\cap\mathcal D = \emptyset$.
\end{proof}

\begin{rem} \label{rem.allnodestargets}
    When all state variables are targeted ($\mathcal T = \mathcal X$), Theorems~\ref{thm.targetctrb}~and~\ref{thm.targetobsv} reduce to Theorem~\ref{thm.structctrb}. This is evident for condition 1 of both theorems. For condition 2 of Theorem~\ref{thm.targetctrb}, when $\mathcal T = \mathcal X$, $\mathcal G' = \mathcal G$ and thus $\mathcal G$ must have no dilations. For condition 2 of Theorem~\ref{thm.targetobsv}, when $\mathcal T = \mathcal X$ it follows that $\mathcal T\cap \mathcal K = \emptyset$ if and only if $\mathcal K = \emptyset$, implying that $\mathcal G$ must have no contractions. Thus, the strong duality reduces to the classical duality between (structural) controllability and  observability.
\end{rem}

\begin{rem}
    For many sparse directed networks, the strong duality condition $\mathcal T \cap \mathcal D = \emptyset$ can be computationally tested in $\mathcal G(A^\transp,C^\transp;F)$ as follows. For every target node $\bm x_i\in\mathcal T$, a breadth-first search algorithm can be used to build the set of nodes $\mathcal S\subseteq{\mathcal X\cup\mathcal U}$ composed of the union of sets $S(P(\bm x_i))$, $S(P(S(P(\bm x_i))))$, and so forth, incurring in a computational complexity of order $O(n+|\mathcal E|)$.
    The existence of a minimal dilation set $\mathcal D_k\supseteq \{\bm x_i\}$ can then be readily verified by testing the condition $|P(\mathcal S')|<|\mathcal S'|$ for all possible subsets $\mathcal S'\subseteq\mathcal S$.
    This procedure is feasible if $\mathcal S$ is sufficiently small, which holds in general for high-dimensional networks when $\mathcal G$ has few cycles, small node degrees ($\sum_j {A_{ij}} \ll n$, $\forall i$), and few targets ($r\ll n$).
    For undirected networks, however, $\mathcal S = \mathcal X$, making this test computationally expensive for large $n$.
\label{rem.computation}
\end{rem}

A sufficient condition based on the structure of the inference graph $\mathcal G(A)$ is provided below for strong duality.

\begin{cor} \label{cor.selfloop}
    $(C,A;F)$ is target observable if $(A^\transp, C^\transp; F)$ is target controllable and every $\bm x_i\in\mathcal T$ has a self-edge.
\end{cor}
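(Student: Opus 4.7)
The plan is to invoke Theorem~\ref{thm.strongduality}: since target controllability of $(A^\transp,C^\transp;F)$ is hypothesised, target observability of $(C,A;F)$ reduces to showing $\mathcal T\cap\mathcal D=\emptyset$ in $\mathcal G(A^\transp,C^\transp;F)$. The remaining task is therefore to prove that the self-edge hypothesis excludes every target node from any minimal dilation set of the transposed graph.

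I would argue by contradiction. Suppose $\bm x_i\in\mathcal T$ lies in a minimal dilation set $\mathcal D_k$ of $\mathcal G(A^\transp,C^\transp;F)$. Transposition preserves self-loops, so $\bm x_i\in P(\{\bm x_i\})\subseteq P(\mathcal D_k)$, placing $\bm x_i$ simultaneously in $\mathcal D_k$ and in its own predecessor set. Minimality then forces a tight structure: the dilation inequality gives $|P(\mathcal D_k)|\le|\mathcal D_k|-1$, while removing any $\bm x\in\mathcal D_k$ destroys the dilation and yields $|P(\mathcal D_k\setminus\{\bm x\})|\ge|\mathcal D_k|-1$. Since $P(\mathcal D_k\setminus\{\bm x\})\subseteq P(\mathcal D_k)$, equality must hold throughout, so $P(\mathcal D_k\setminus\{\bm x\})=P(\mathcal D_k)$ and consequently $P(\{\bm x\})\subseteq P(\mathcal D_k\setminus\{\bm x\})$. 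Specialised to $\bm x=\bm x_i$, this yields an edge $\bm x_i\to\bm x_j$ for some $\bm x_j\in\mathcal D_k\setminus\{\bm x_i\}$, i.e.\ the self-edged target is itself a predecessor of another member of the dilation set.

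Next I would combine this structural information with the target controllability hypothesis. By Theorem~\ref{thm.targetctrb} applied to the transposed graph, $\bm x_i$ is reached from some driver $\bm u\in\mathcal U$ by a directed path in $\mathcal G(A^\transp,C^\transp;F)$, and by Assumption~\ref{assump.independent} each driver has a single out-edge and therefore serves as a direct predecessor of exactly one state node. Tracking where this driver-to-$\bm x_i$ path first enters $\mathcal D_k$ and iterating the identity $P(\{\bm x\})\subseteq P(\mathcal D_k\setminus\{\bm x\})$ should eventually force either a driver $\bm u$ or the single-predecessor entry point on the path to serve as direct predecessor of more than one node in $\mathcal D_k$, producing the desired contradiction.

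The main obstacle will be making this entry-point argument rigorous when the driver-to-target path passes through several state nodes outside $\mathcal D_k$ before hitting $\bm x_i$. The cleanest way to finish is likely a Hall-style augmenting-path argument in the bipartite graph between $\mathcal D_k$ and $P(\mathcal D_k)$, seeded by the self-loop pairing $\bm x_i\mapsto\bm x_i$: if an augmenting path reaching into the predecessor side through the driver can be built, the resulting matching of size $|\mathcal D_k|$ contradicts the maximum-matching bound $|P(\mathcal D_k)|=|\mathcal D_k|-1$ dictated by the dilation, and hence $\mathcal T\cap\mathcal D=\emptyset$ as required.
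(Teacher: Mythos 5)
Your reduction to Theorem~\ref{thm.strongduality} and your structural analysis of a minimal dilation set $\mathcal D_k$ containing a self-edged node are both correct, and in fact more careful than the paper's own one-line argument, which simply asserts that a self-edged node never belongs to a minimal dilation/contraction set. But the step you yourself flag as the ``main obstacle'' is a genuine gap, and it cannot be closed: the claim you are trying to establish by contradiction\textemdash that a self-edged target reachable from a driver cannot lie in a minimal dilation set\textemdash is false. Take $\mathcal G(A^\transp,C^\transp;F)$ with state nodes $\bm x_1,\bm x_2,\bm x_3,\bm x_4$, edges $\bm x_1\to\bm x_1$, $\bm x_1\to\bm x_2$, $\bm x_4\to\bm x_1$, $\bm x_4\to\bm x_3$, a single driver $\bm u_1\to\bm x_4$, and $\mathcal T=\{\bm x_1\}$. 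Then $P(\{\bm x_1,\bm x_2,\bm x_3\})=\{\bm x_1,\bm x_4\}$, so $\{\bm x_1,\bm x_2,\bm x_3\}$ is a minimal dilation set (no proper subset has a dilation) containing the self-edged target $\bm x_1$, even though the system is target controllable: $\bm u_1\to\bm x_4\to\bm x_1$ is a driver path and the only subset of $\mathcal T$, namely $\{\bm x_1\}$, satisfies $|P(\{\bm x_1\})|=2\geq 1$. The driver path enters $P(\mathcal D_k)$ through $\bm x_4\notin\mathcal D_k$, which is \emph{already} counted as a predecessor of $\bm x_3$; it contributes no new predecessor, and no augmenting path exists because any matching of $\mathcal D_k$ into $P(\mathcal D_k)$ is capped at $|P(\mathcal D_k)|=2<3=|\mathcal D_k|$ no matter how it is seeded. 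The contradiction you hope to extract never materializes.

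This is not merely a defect of your write-up: the configuration above appears to contradict Corollary~\ref{cor.selfloop} as literally stated. Taking $A$ with nonzero entries $A_{11},A_{12},A_{41},A_{43}$, $C$ sensing $\bm x_4$, and $F$ selecting $\bm x_1$, a direct computation gives $\rank(F\mathcal C)=\rank(F)=1$ for the dual pair, while $\rank\bigl(\begin{bmatrix}\mathcal O^\transp & F^\transp\end{bmatrix}^\transp\bigr)=4>3=\rank(\mathcal O)$ for every realization with nonzero parameters, so $(C,A;F)$ fails condition \eqref{eq.functobsv} generically. The paper's proof obscures this because its key claim\textemdash a self-edge excludes a node from every minimal dilation set\textemdash is immediate only for singleton sets; your own identity $P(\mathcal D_k\setminus\{\bm x_i\})=P(\mathcal D_k)$ shows exactly how a self-edged node can sit inside a larger minimal dilation set once it also feeds another member of that set. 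So the missing idea is not better bookkeeping on your part: any correct repair must exclude configurations like the one above, e.g., by reverting to the hypothesis $\mathcal T\cap\mathcal D=\emptyset$ of Theorem~\ref{thm.strongduality}, which is precisely what the self-edge condition was meant to replace.
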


\begin{proof}
    If a target node has a self-edge, then it does not belong to a minimal contraction/dilation set. Since this holds for all $\bm x_i\in\mathcal T$,  conditions 2 of Theorems~\ref{thm.targetctrb} and \ref{thm.targetobsv} are satisfied for graphs $\mathcal G(A^\transp,C^\transp;F)$ and $\mathcal G(C,A;F)$, respectively.
\end{proof} \vspace{-0.3cm}

\begin{figure}     
\centering
    \includegraphics[width=0.7\columnwidth]{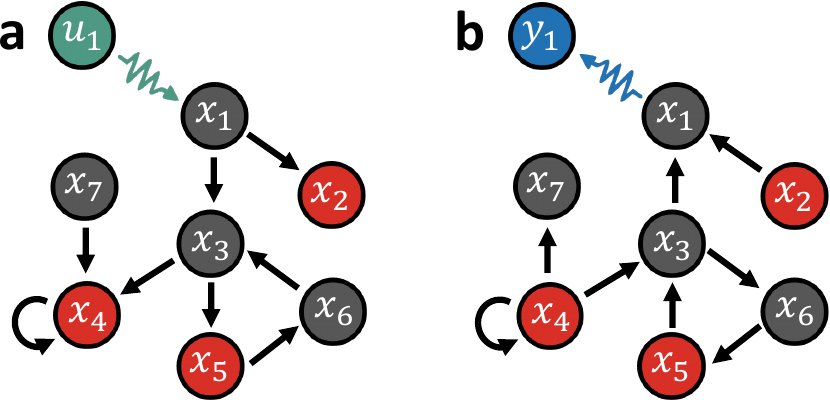}
    \caption{\label{fig.graph} Inference graphs of a dual pair of dynamical systems. (a) Target controllability of a structured system $(A,B;F)$. (b) Target observability of the dual structured system $(B^\transp,A^\transp;F)$. The driver, sensor, and target nodes are indicated in green, blue, and red, respectively. 
    } \vspace{-0.5cm}
\end{figure}

\begin{example} \label{examp.graph}
Consider the dual pair of systems illustrated in Fig.~\ref{fig.graph}. System $(A,B;F)$ is target controllable since there is a path from $u_1$ to every target node and no subset $\mathcal T_\ell\subseteq\mathcal T$ has a dilation in $\mathcal G' = \mathcal G\backslash\{\bm x_7\}$ (e.g., for $\mathcal T_\ell = \{\bm x_4,\bm x_5\}$, we have that $P(\mathcal T_\ell) = \{\bm x_3,\bm x_4\}$ and thus $|P(\mathcal T_\ell)|=|\mathcal T_\ell|$). However, the dual system  $(B^\transp,A^\transp;F)$ is not target observable since $\mathcal D_k = \{\bm x_2,\bm x_3\}$ is a minimal dilation set and $\mathcal T\cap\mathcal D = \{\bm x_2\}$, hence strong duality does not hold. There are several ways to enforce strong duality and make $(B^\transp,A^\transp;F)$ target observable; for example, by adding a self-edge to $\bm x_2$, connecting a second driver node $\bm u_2$ to $\bm x_2$, or removing $\bm x_2$ from the set of target nodes (corresponding to changes in the structure of matrices $A$, $B$, and $F$, respectively). 
Following Definition~\ref{def.targetctrb}, the conditions for target controllability and target observability are generic and hold
for all numerical matrices $(\tilde A,\tilde B;\tilde F)$ sharing the structure of $(A,B;F)$ except for a set of matrices of Lesbegue measure zero.
\end{example}

\section{Optimal Driver and Sensor Placement}
\label{sec.optimalplac}

\subsection{Duality and algorithms}

The weak duality principle shows that methods developed for target observability problems can be directly applied to target controllability problems (by using the dual graph),~as well as the converse when strong duality holds. Such methods include algorithms designed to test the conditions of Theorems \ref{thm.targetctrb} and \ref{thm.targetobsv} for high-dimensional systems, or\textemdash as we consider next\textemdash to find a minimum set of driver nodes $\mathcal U$ for target controllability or sensor nodes $\mathcal Y$ for target observability. The latter are respectively addressed as the problems of minimum driver placement for target controllability (MDPt) and minimum sensor placement for target observability (MSPt).

Thus far, no algorithm has been developed to solve the MSPt for \textit{general} inference graphs $\mathcal G(A)$ due to the computational challenges in verifying condition 2 of Theorem~\ref{thm.targetobsv} for generic (possibly undirected) graphs (Remark~\ref{rem.computation}). For a broad class of applications where every target node has a self-edge in $\mathcal G(A)$ (Corollary~\ref{cor.selfloop}), the MSPt can be formulated as a set cover problem, which can be approximately solved by combining a greedy algorithm and breadth-first searches, as presented in \cite[Alg. 1]{Montanari2022}. Owing to the weak duality principle, it follows that the MDPt can also be formulated as a set cover problem for the dual graph $\mathcal G(A^\transp)$ and solved by the same algorithm when every target node has a self-edge.

\begin{figure}[t] 
    \centering
    \includegraphics[width=0.88\columnwidth]{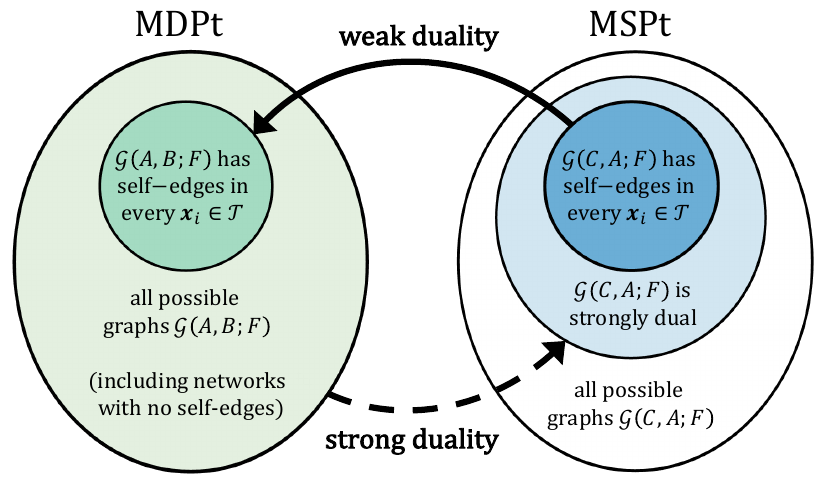}
    \caption{\label{fig.algorithmset} Sets of MDPt and MSPt problems. Depending on the structure of the inference graph $\mathcal G(A)$, algorithms developed to solve a MDPt problem can be employed to solve a dual MSPt problem, and vice versa. The origin of the dashed and solid arrows are the set of  problems that were originally solved by the algorithms presented in Refs.~\cite{Gao2014} and \cite{Montanari2022}, respectively. The endpoints of these arrows indicate the new sets of problems that these algorithms can solve due to the weak and strong duality principles. The light (dark) shades represent weakly (strongly) dual sets of problems.} \vspace{-0.5cm}
\end{figure}

When strong duality holds, we show that a new class of problems can be solved. Unlike the MSPt, the MDPt can be solved efficiently (though approximately) for \textit{any} inference graph $\mathcal G(A)$. This is enabled by the fact that condition 2 of Theorem~\ref{thm.targetctrb} is weaker than condition 2 of Theorem~\ref{thm.targetobsv}, which allows it to be enforced using a greedy algorithm that recursively solves a maximum matching problem in an induced bipartite graph, as proposed in \cite[Alg. 3]{Gao2014}. The strong duality principle thus enables this MDPt algorithm to be employed for MSPt problems, providing an efficient (approximate) solution for the set of all $\mathcal G(C,A;F)$ that satisfy the strong duality condition in Theorem~\ref{thm.strongduality}. 

Fig.~\ref{fig.algorithmset} summarizes the relation between the MDPt and MSPt problems, illustrating how the weak and strong duality principles can be applied for the conversion of algorithms from one problem to the other. The light green and dark blue sets are those containing the problems originally solved in Refs.~\cite{Gao2014} and \cite{Montanari2022}, respectively. It is now evident that the strong duality principle enables the translation of algorithms to solve a new class of problems (contained in the light blue region) that did not have a solution available in the literature yet. The white region remains as the most general set of MSPt problems with no available solvers.

\subsection{Numerical results}

Fig.~\ref{fig.complexnet} illustrates the MDPt and MSPt problems applied to a high-dimensional system, the \textit{C. elegans} neural network. 
The network is modeled as a linear system \eqref{eq.dynsys} where each variable $\bm x_i$ represents a neuron (node) and $A$ is the adjacency matrix. Given the highly directed and sparse nature of the inference graph $\mathcal G(A)$ and the small number of selected target nodes ($r=0.05n$), the presence of minimal dilation sets containing $\mathcal T$ can be efficiently tested following Remark~\ref{rem.computation}. For the set of target nodes $\mathcal T$ shown in Fig.~\ref{fig.complexnet}a, it holds that $(C,A;F)$ satisfies $\mathcal T\cap\mathcal K = \emptyset$ for any choice of $C$ and, therefore, the system is strongly dual. The network has no self-edges, implying that this MSPt problem falls into the class of problems that can be solved by the MDPt algorithm (light blue set in Fig.~\ref{fig.algorithmset}).
Fig.~\ref{fig.complexnet}a shows the minimum set of drivers and sensors selected with \cite[Alg. 3]{Gao2014} by considering the original graph $\mathcal G(A)$ and the dual graph $\mathcal G(A^\transp)$, respectively.
The algorithm provides an efficient approximation, in which the minimum number of sensors and drivers correspond to only 1\% and 1.5\% of the network size, respectively.
As the number of targets increases, Fig.~\ref{fig.complexnet}b shows that the number of drivers and sensors remain relatively small compared to the network size, as also observed in other complex networks without and with self-edges (cf. \cite[Fig. 6]{Gao2014} and \cite[Fig. 2]{Montanari2022}).

Algorithmic implementations to solve the MDPt \cite[Alg. 3]{Gao2014} and MSPt \cite[Alg. 1]{Montanari2022} problems for arbitrary inference graphs $\mathcal G(A)$ and target sets $\mathcal T$ are available at \href{https://github.com/montanariarthur/TargetCtrb}{https://github.com/montanariarthur/TargetCtrb}. Beyond the placement of drivers and sensors, our GitHub repository also provides code on how to effectively design feedback controllers \cite{Montanari2023} and functional observers \cite{Fernando2010,Montanari2022} 
for the stable control and estimation of target variables, respectively.

\begin{figure} 
    \centering
    \includegraphics[width=0.88\columnwidth]{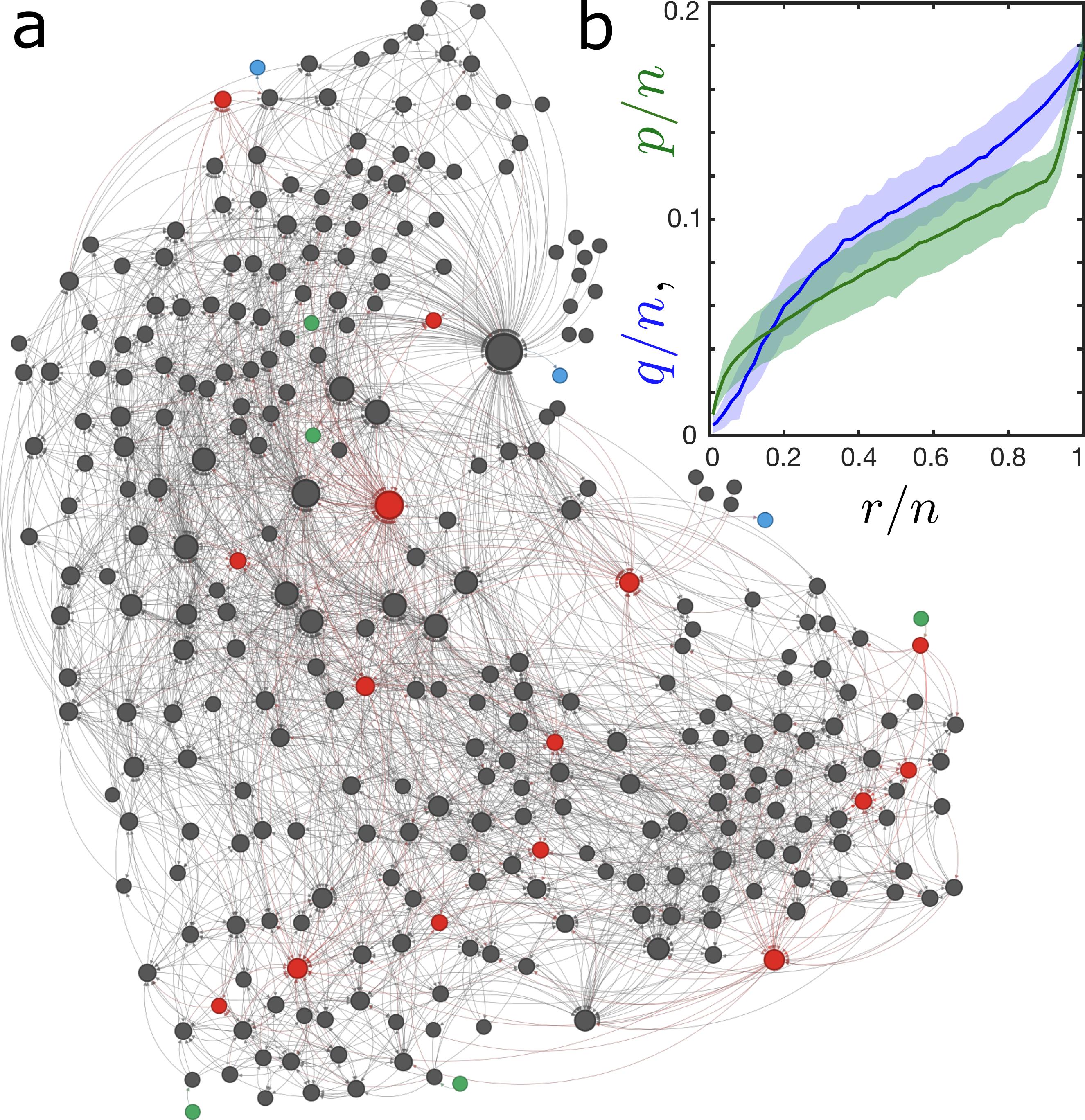}
    \caption{\label{fig.complexnet} (a) Optimal driver and sensor placement for the target controllability and target observability of the \textit{C. elegans} neural network. The driver, sensor, and target nodes are indicated in green, blue, and red, respectively. 
    (b) Minimum number of drivers $p$ (green) and sensors $q$ (blue) as a function of the number of target nodes $r$ (normalized by the network size $n$). Each data point is an average over 100 realizations of randomly selected target nodes, where shaded areas indicate three standard deviations.
    } \vspace{-0.5cm}
\end{figure}

\section{Conclusion}

Examining the rank-based conditions for output controllability and functional observability, it is not immediately clear for which systems the output controllability of $(A,B;F)$ implies the functional observability of $(B^\transp,A^\transp;F)$. For network applications where target variables are independently sought to be controlled or estimated (Assumption~\ref{assump.independent}), our results provide a graph-theoretic characterization of target controllability and target observability.
Unlike output controllability and functional observability, each characterized by a single rank-based condition, target controllability and target observability are individually depicted by two graph-based conditions 
that highlight the weak and strong dualities between these properties. The first condition\textemdash related to the existence of paths from/to target nodes to/from sensor/driver nodes\textemdash is equivalent for any dual pair of inference graphs. However, the second condition\textemdash related to dilations and contractions in a graph\textemdash is inherently stronger for target observability than for target controllability. In particular, it follows from Theorems~\ref{thm.structctrb}--\ref{thm.strongduality} that the set of structurally observable ((and, equivalently, the dual set of structurally controllable) systems are contained inside the set of target observable systems, which in turn are contained inside the dual set of target controllable systems.

Our application of an MDPt algorithm for a class of MSPt problems is one of many possible uses of the established duality principle. Here, we focused on algorithms proposed in Refs.~\cite{Gao2014,Montanari2022} due to their intrisic connection to the graph-theoretic conditions in Theorems \ref{thm.targetctrb} and \ref{thm.targetobsv}. Nonetheless, we expect that many other methods developed for the broadly explored problem of target controllability (based on graph theory \cite{Li2020,Li2023}, linear programming \cite{Waarde2017}, or structural rank conditions \cite{Czeizler2018,Moothedath2019}) may also find new applications in functional observability problems, as well as the converse. 

\appendix

\section{Proof of Theorem~\ref{thm.structctrb}}
\label{app.proof}

\noindent
\textit{Proof of Theorem~\ref{thm.targetctrb}.} Let $\mathcal X_1\subseteq\mathcal X$ be the set of all state variables belonging to a path in $\mathcal G$ from some driver node $\bm u_i\in\mathcal U$ to some target node $\bm x_i\in\mathcal T$, and $\mathcal X_2 = \mathcal X\backslash \mathcal X_1$ be the complement set. Define $|\mathcal X_1| = k$ and $|\mathcal X_2| = n-k$.

\textit{Sufficiency.} Suppose that condition 1 is satisfied, i.e., $\mathcal T\subseteq\mathcal X_1$ and $k\geq r$. After applying a permutation of coordinates such that the nodes in $\mathcal X_1$ appear first, we have the form
\begin{equation}
    A = \begin{bmatrix} A_{11} & A_{12} \\ A_{21} & A_{22} \end{bmatrix}, \,\,
    B = \begin{bmatrix} B_{1} & 0 \\ 0 & B_2 \end{bmatrix}, \,\, 
    F = \begin{bmatrix} F_1 & 0 \end{bmatrix},
\end{equation}
where $A_{11}\in\R^{k\times k}$, $A_{22}\in\R^{(n-k)\times (n-k)}$, $B_1\in\R^{k\times p_1}$, $B_2\in\R^{(n-k)\times (p-p_1)}$, and $F_1\in\R^{r\times k}$. Matrices $A_{12}$ and $A_{22}$ correspond to paths between sets $\mathcal X_1$ and $\mathcal X_2$, $B_1$ corresponds to all $p_1$ driver nodes $\bm u_i \in \mathcal U_1\subseteq\mathcal U$ that have some path to a target node, and $B_2$ to $\bm u_i\in\mathcal U_2 = \mathcal U\backslash\mathcal U_1$. This yields the subgraph $\mathcal G' = \{\mathcal X_1,\mathcal E_1\}$, where $\mathcal E_1= \mathcal E_{\mathcal X_1}\cup\mathcal E_{\mathcal U_1}$, $(\bm x_i,\bm x_j)\in\mathcal E_{\mathcal X_1}$ if $[A_{11}]_{ij}\neq 0$ , and $(\bm x_i,\bm u_j)\in\mathcal E_{\mathcal U_1}$ if $[B_{1}]_{ij}\neq 0$.

The controllability matrix $\mathcal C$ of system $(A,B)$ has the form
\begin{equation}
    \mathcal C =
    \begin{bmatrix}
        B_1 & 0 & A_{11}B_1 & 0 & A_{11}^2 B_1 & 0 & \ldots\\
        0 & B_2 & A_{21}B_1 & A_{22}B_2 & XB_1 & A_{22}^2B_2 & \ldots
    \end{bmatrix},
\label{eq.proof.ctrbsubmatrix}
\end{equation}
where $X = A_{21}A_{11}+A_{22}A_{21}$. In Eq.~\eqref{eq.proof.ctrbsubmatrix}, we have used the fact that $A_{12}A_{22}^k B_2 = 0$, $\forall k\in\{0,1,\ldots\}$, since by definition no driver node $\bm u_i\in\mathcal U_2$ has a path to some node $\bm x_i\in\mathcal X_1$. Likewise, matrices of form $A_{12}A_{21} B_1$ and $A_{12}A_{21}A_{11}B_1$ are zero; otherwise, there would exist a path from $\bm u_i\in\mathcal U_1$ to $\bm x_i\in\mathcal X_1$ passing by a node $\bm x_j\in\mathcal X_2$, which contradicts the assumption that all such paths are already covered in $\mathcal G' = \{\mathcal X_1,\mathcal E_1\}$.
Therefore, it follows by construction that $F\mathcal C = F_1\mathcal C_1$, where $\mathcal C_1$ is the controllability matrix of pair $(A_{11},B_1)$. It remains to show that if no subset $\mathcal T_\ell\subseteq\mathcal T$ has a dilation in $\mathcal G'$ then $\rank(F_1\mathcal C_1) = r$.

Assume without loss of generality that the first $r$ nodes in $\mathcal X_1$ belong to $\mathcal T$. First, suppose no driver node is directly connected to a target node in $\mathcal G'$ (first $r$ rows of $B_1$ are zero). Following Remark~\ref{rem.dilation}, if no subset $\mathcal T_\ell\subseteq\mathcal T$ has a dilation in $\mathcal G'$, it follows that the first $r$ rows of $A_{11}$ have nonzero entries in at least $r$ columns of the submatrix formed by these $r$ rows. Since $A$ satisfies Assumption \ref{assump.independent} and there exists a path from some driver node in $\mathcal U_1$ to every target node in $\mathcal T$, it follows that the first $r$ rows of $\mathcal C_1$ also have nonzero entries in at least $r$ columns. Therefore, $\rank(F_1\mathcal C_1) = r$.
Second, suppose a driver node is directly connected to target node $\bm x_1$. Given that $B_1$ has a single nonzero entry per column (Assumption~\ref{assump.independent}), $\bm x_1$ does not belong to a minimal dilation set and the first row of matrix $F_1\mathcal C_1$ is always linearly independent from the other rows. Therefore, $\rank(F_1\mathcal C_1) = \rank(F_1'\mathcal C_1') + 1$, where $F_1'\mathcal C_1'$ is a submatrix of $F_1\mathcal C_1$ without the first row. The rest of the proof follows as above for the submatrix $F_1'\mathcal C_1'$.

\textit{Necessity.} The necessity of condition 2 follows from the fact that that if some subset $\mathcal T_\ell\subseteq \mathcal T$ has a dilation, then the first $r$ rows of $A_{11}$ have nonzero entries in less than $r$ columns, and so does $\mathcal C_1$. This implies that $\rank(F_1\mathcal C_1) < r$.

For the necessity of condition 1, suppose there are no paths from driver nodes $\mathcal U$ to some nodes $\mathcal X_1\subseteq\mathcal X$. Let $\mathcal X_2 = \mathcal X\backslash\mathcal X_2$, $|\mathcal X_1| = k$, and $|\mathcal X'_2| = n- k$. After applying a coordinate permutation such that nodes in $\mathcal X_1$ appear first, we have that 
\begin{equation}
    A = \begin{bmatrix} A_{11} & 0 \\ A_{21} & A_{22} \end{bmatrix}, \,\,
    B = \begin{bmatrix} 0 \\ B_2 \end{bmatrix}, \,\, 
    F = \begin{bmatrix} F_1 & 0 \\ 0 & F_2 \end{bmatrix},
\end{equation}
where $A_{11}\in\R^{k\times k}$, $A_{22}\in\R^{(n-k)\times (n-k)}$, $B_2\in\R^{(n-k)\times p}$, and other matrices have consistent dimensions. Given Assumption~\ref{assump.independent}, $F_1 \in \R^{r_1\times k}$ correspond to the subset $\mathcal T_1\subseteq\mathcal X_1$, and $F_2\in\R^{(r-r_1)\times (n-k)}$ to $\mathcal T_2 = \mathcal T\backslash \mathcal T_1$. It follows that
\begin{equation}
    \rank (F\mathcal C) =\rank 
    \begin{bmatrix}
        F_1 & 0 \\ 0 & F_2
    \end{bmatrix}
    \begin{bmatrix}
        0 \\ \mathcal C_2
    \end{bmatrix}
    =
    \rank(F_2\mathcal C_2) 
    \leq r-r_1,
\end{equation}
\noindent
where $\mathcal C_2$ is the controllability matrix of $(A_{22},B_2)$. Thus, if there exists a target node with no path coming from a driver node, then $\mathcal T_1\neq \emptyset$, $r_1 > 0$, and condition \eqref{eq.outputctrb} is violated. \QEDwhite




\end{document}